\pgfplotsset{compat=1.9} 
\newtheorem{thm}{Theorem}
\newtheorem{exmp}[thm]{Example}
\newtheorem{defn}[thm]{Definition}
\newcommand{\norm}[1]{\left\Vert#1\right\Vert}
\def\XXint#1#2#3{{\setbox0=\hbox{$#1{#2#3}{\int}$ }
\vcenter{\hbox{$#2#3$ }}\kern-.6125\wd0}}
\newcounter{lastnote}
\title{Covariant Quantum Fields via Lorentz Group Representation of Weyl Operators}
\author{Radhakrishnan Balu}
\affil{Army Research Laboratory Adelphi, MD, 21005-5069, USA \\
       radhakrishnan.balu.civ@mail.mil
      }  
\affil{Department of Mathematics \&
	Norbert Wiener Center for Harmonic Analysis and Applications, 
	University of Maryland,
	College Park, MD 20742.\\
	rbalu@umd.edu}          
\date{Received: date / Accepted: \today}
\begin{document}
\maketitle

\begin{abstract}
The building blocks of Hudson-Parthasarathy quantum stochastic calculus start with Weyl operators on a symmetric Fock space. To realize a relativistically covariant version of the calculus we construct representations of Poinca$\grave{r}$e group in terms of Weyl operators on suitably constructed, Bosonic or Fermionic based on the mass and spin of the fundamental particle, Fock spaces. We proceed by describing the orbits of homogeneous Lorentz group on $\mathscr{R}^4$ and build fiber bundle representations of Poinca$\grave{r}$e group induced from the stabilizer subgroups (little groups) and build the Boson Fock space of the Hilbert space formed from the sections of the bundle. Our Weyl operators are constructed on symmetric Fock space of this space and the corresponding annihilation, creation, and conservation operators are synthesized in the usual fashion in relativistic theories for space-like, time-like, and light-like fields. We achieve this by constructing transitive systems of imprimitivity (second-quantized SI), which are dynamical systems with trajectories dense in the configuration space, by induced representations. We provide the details of the field operators for the case of massive Bosons as the rest are similar in construction and indicate the ways to construct adapted processes paving way for building covariant quantum stochastic calculus.
\end{abstract}

\section{Introduction}
\label{intro}
Quantum stochastic calculus (QSC) \cite {KP1992} is a mathematically precise description of microscopic systems interacting with an environment that has been successfully applied to quantum optics based processes rigorously established by Accardi \cite {Accardi1990}, quantum filters framework of Belavkin \cite {Belavkin2005}, and quantum information processing framework of Mabuchi \cite {Tezak2012} to name a few. A relativistically covariant version of the program was developed by Applebaum \cite {Applebaum1995} for the Fermionic case and Frigerio for the Bosonic noise \cite {Frigerio1989}. Relativistic effects are an important consideration in Fermionic systems of topological quantum systems \cite {Witten2016}. Here, we present a systematic formulation of covariant QSC for all fundamental particles by extending the theory for relativistic quantum particles, that is based on systems of imprimitivity, to field theoretic context. Systems of imprimitivity is a covariant relationship between a pair of conjugate observables (can be thought of as {"}position{"} described by a projection measure and {"}momentum{"} operators by a unitary representation of the group) that can be used to describe a class of systems within unitary equivalence in a concise manner when there is a group representing the kinematics or dynamics of the system. Notions related relativistic quantum systems such as SI are based on orthodox quantum logic framework developed by Birkhoff-von Neumann and Mackey among others. Accordingly, our setup is separable Hilbert spaces that form standard quantum logics that are complemented orthomodular lattices \cite {Varadarajan1985}. The simplest example is to describe the Weyl commutation relation of a class of systems undergoing $Shr\ddot{o}dinger$ evolution. The SI in this case is defined with respect to the additive group of the real line to account for the time translation symmetry of the dynamics where the position operator is  a canonical observable. Hahn-Hellinger theorem \cite {KP1992} enables decomposittion of an arbitrary observable in terms canonical ones and so the SI approach is applicable to arbitrary quantum system and can be thought of as a quantization scheme. Some early examples for finite groups appeared in the works of von Neumann \cite{vN1949} that are ergodic systems. Wigner used these notions informally in his study on the theory of representations of the inhomogeneous Lorentz (Poincar$\grave{e}$) group \cite {WignerLz1939} that are transitive systems and Mackey \cite{Mackey1963} provided the systematic development of SI for systems with infinite degrees of freedom. 

Systems of imprimitivity are a very useful characterization of dynamical systems, when the configuration space of a quantum system is described by a group, from which infinitesimal forms in terms of differential equations ($Shr\ddot{o}dinger$, Heisenberg, and Dirac etc), and the canonical commutation relations can be derived. For example, using SI arguments it can be shown \cite {Wigner1949} that massless elementary particles with spin less than or equal to 1 can't have well defined position operators, that is photons are not localizable. The concept of localization, where the position operator is properly defined in a manifold, and covariance in relativistic sense of systems can be shown to be a consequence of systems of imprimitivity. Now, we have a procedure called Mackey{'}s machinery to synthesize SI, a class of unitarily equivalent systems, using representations induced by subgroups which in our primary example will be the little groups of inhomogeneous Lorentz group.  Our previous work in SI involved application to infinite unitary representations of the Poincar$\grave{e}$ group, that are induced by stabilizer subgroups or little groups in physics literature, in the context of quantum walks \cite {Rad2018} and \cite {Rad2019}. We follow the notions and notations from the work of Varadarajan \cite {Varadarajan1985} where the Dirac equation is derived in the most rigorous fashion as a consequence of SI. Varadarajan set up this program first by establishing standard quantum logics and geometry on orthomodular lattices and further developed the configuration space, of a Dirac particle, endowed with a topology, measure space, and a G-space of a locally compact second countable (lcsc) group G (for example the Poincar$\grave{e}$ group) providing a comprehensive view of the subject that exploits symmetry and is formulated in geometric terms. We extend this program by building the respective Fock spaces and then identifying Weyl operators indexed by elements of Poincar$\grave{e}$ group leading to covariant field operators (creation, annihilation, and preservation).

Let us first define the notion of SI and an important theorem by Mackey that characterizes such systems in terms of induced representations.

\begin {defn} \cite {Varadarajan1985} A G-space of a Borel group G is a Borel space X with a Borel automorphism $\forall{g\in{G}},t_g:x\rightarrow{g.x},x\in{X}$ such that
\begin {align}
&t_e \text{ is an identity} \\
&t_{g_1,g_2} =t_{g_1}t_{g_2}
\end {align}
The group G acts on X transitively if $\forall{x,y\in{X}},\exists{g}\in{G}\ni{x=g.y}.$
\end {defn}
\begin {defn} \cite {Varadarajan1985} A system of imprimitivity for a group G acting on a Hilbert space $\mathscr{H}$ is a pair (U, P) where $P: E\rightarrow{P_E}$ is a projection valued measure defined on the Borel space X with projections defined on the Hilbert space and U is a representation of G satisfying
\begin {equation}
U_gP_EU^{-1}_g = P_{g.E}
\end {equation}  
\end {defn} 
The next step is to consider semidirect product of groups, that naturally describe the dynamics system of relativistic quantum particles, and use the representation of the subgroup to induce a representation in such a way that it is an SI.

\begin {defn}
Let A and H be two groups and for each $h\in{H}$ let $t_{h}:a\rightarrow{h[a]}$ be an automorphism (defined below) of the group A. Further, we assume that $h\rightarrow{t_h}$ is a homomorphism of H into the group of automorphisms of A so that
\begin {align} \label{semidirectEq}
h[a] &= hah^{-1}, \forall{a\in{A}}. \\
h &= e_H, \text{  the identity element of H}. \\
t_{{h_1}{h_2}} &= t_{h_1}t_{h_2}.
\end {align}
Now, $G=H\rtimes{A}$ is a group with the multiplication rule of $(h_1,a_1)(h_2,a_2) = (h_1{h_2},a_1{t_{h_1}}[a_2])$. The identity element is $(e_H,e_A)$ and the inverse is given by $(h,a)^{-1} = (h^{-1},h^{-1}[a^{-1}]$. When H is the homogeneous Lorentz group and A is $R^4$ we get the Poincar$\grave{e}$ group via this construction.
\end {defn}

\begin {defn} Let group G be lcsc and X be a standard Borel G-space, and M be a standard Borel group. A map f is a (G, X, M)-cocycle relative to $\alpha$ which is an invariant measure class if (1) f is a Borel map of $G \times X$ into M. (2) f(e, x) = 1 for all $x \in X$. (3) $f(g_1 g_2, x) = f(g_1, g_2x) f(g_2, x)$ for almost all $(g_1, g_2, x) \in G \times G \times X$. If instead the map f satisfies conditions (2) and (3) for all x it is called a strict cocycle. 
\end {defn}
As an example let us construct a pair of observables obeying the SI relation.
\begin {exmp} \cite{Varadarajan1985} Let $\mathbb{G}$ be lcsc and X is the G-space which is Borel and endowed with a $\sigma-$finite measure $\alpha$. Then, we can define another $\sigma-$finite measure $\alpha^{g^{-1}}$ which is absolutely continuous with respect to $\alpha$ giving rise to a Radon-Nikodym derivative $r_g = \frac{d\alpha}{d\alpha^{g^{-1}}}$.  unitary operators that satisfy the SI relation as follows:
\begin {align*}
\mathscr{H} &= \mathscr{L}^2(X, k, \alpha), \text {  where, k is a separable Hilbert space  }. \\
U_g f(x) &= \{r_g (g^{-1} x)\}^{\frac{1}{2}} f(g^{-1}x),  f\in\mathscr{H} \text {  Momentum like operator}.\\
P_E F &= \chi_E f \text { Position like operator}. \\
\end {align*}
\end {exmp}

Next, let us recollect few results \cite{Varadarajan1985} on systems of imprimivity that will help us establish our main result in this work.
\begin {thm} (Theorem 5.19) There are quasi-invariant Borel measures on $X = G/G_0$, where G is a lcsc group and $G_0$ is a closed subgroup with the map $\beta:G \rightarrow X = g.x, x \in X$. If $\lambda$ is a finite  quasi-invariant measure on G, and $\tilde{\lambda}$ is the finite measure on X defined by the equation $\tilde{\lambda}(A) = \lambda(\beta^{-1}(A))$, then $\tilde{\lambda}$ is is quasi-invariant. Any two quasi-invariant $\sigma$-finite measures on X are mutually absolutely continuous. If $E \subset X$ is any Borel set, the quasi-invariant quasi-invariant $\sigma$-finite measures on X vanish for E if and only if $\beta^{-1}(E)$ is a null set of G. If $\alpha$ is any Borel measure on X, $\alpha$ is quasi-invariant (invariant) if and only if $\alpha^0$ is a quasi-invariant (lef-invariant) measure on G. 
\end {thm}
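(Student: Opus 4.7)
The plan is to obtain quasi-invariant measures on $X=G/G_0$ by pushing forward finite quasi-invariant measures from $G$ via the quotient map $\beta$, and then to pin down the measure class on $X$ by a lift-disintegrate argument relating measures on $X$ to Haar-related measures on $G$.

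First I would produce a finite quasi-invariant measure on $G$. Let $\mu_G$ be a left Haar measure on $G$; since $G$ is lcsc there is a strictly positive Borel function $\phi \in L^{1}(\mu_G)$, and $\lambda := \phi\,\mu_G$ is a finite measure in the same measure class as $\mu_G$. Left-invariance of $\mu_G$ then gives quasi-invariance of $\lambda$: for every $g \in G$, $\lambda^{g^{-1}}$ is equivalent to $\lambda$ with Radon--Nikodym derivative $\phi(g^{-1}\cdot)/\phi(\cdot)$.

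Next I would verify that the pushforward $\tilde\lambda(A) := \lambda(\beta^{-1}(A))$ is quasi-invariant. The quotient map is $G$-equivariant, so $\beta^{-1}(g\cdot A) = g\cdot\beta^{-1}(A)$, and consequently
\begin{equation*}
\tilde\lambda(g\cdot A) \;=\; \lambda\bigl(g\cdot\beta^{-1}(A)\bigr).
\end{equation*}
Quasi-invariance of $\lambda$ on $G$ immediately yields $\tilde\lambda(g\cdot A)=0$ iff $\tilde\lambda(A)=0$. The same identity establishes the stated equivalence for null sets: a Borel $E\subset X$ is $\tilde\lambda$-null iff $\beta^{-1}(E)$ is $\lambda$-null iff $\beta^{-1}(E)$ is Haar-null in $G$, since $\lambda$ and $\mu_G$ are equivalent.

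The main obstacle — and the step that carries the real content of the theorem — is proving that \emph{any} two quasi-invariant $\sigma$-finite measures on $X$ have the same null ideal. The plan here is to exhibit a two-way correspondence $\alpha \longleftrightarrow \alpha^{0}$ between quasi-invariant measures $\alpha$ on $X$ and measures $\alpha^{0}$ on $G$ in the Haar class. In one direction, given a strictly positive Borel $\rho$-function on $G$ satisfying $\rho(g\xi)=(\Delta_G(\xi)/\Delta_{G_0}(\xi))\,\rho(g)$ for $\xi\in G_0$ (which exists because $G$ is lcsc), the weighted measure $\rho\,\mu_G$ on $G$ disintegrates over $G_0$-cosets against the Haar measure of $G_0$ to produce a quasi-invariant measure on $X$; and $\alpha$ is invariant precisely when $\rho$ can be chosen constant, equivalently $\alpha^{0}$ is left-invariant. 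In the other direction, given quasi-invariant $\alpha$ on $X$ one lifts to $\alpha^{0}$ on $G$ by integrating along $G_0$-fibres, and a Fubini-type argument shows $\alpha^{0}$ lies in the Haar class. Any two quasi-invariant $\sigma$-finite $\alpha_1,\alpha_2$ on $X$ thus lift to $\alpha_1^{0},\alpha_2^{0}$ that are both equivalent to $\mu_G$, hence equivalent to each other; pushing the equivalence back down through $\beta$ gives the mutual absolute continuity on $X$, and the same correspondence yields the final ``$\alpha$ quasi-invariant (invariant) iff $\alpha^{0}$ quasi-invariant (left-invariant)'' statement. The technical hard point is constructing $\rho$ and verifying the disintegration/lifting is well defined and measure-class-preserving; once in hand, every remaining assertion of the theorem is a direct corollary.
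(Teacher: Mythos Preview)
The paper does not supply a proof of this statement: it is quoted verbatim as Theorem~5.19 from Varadarajan's \emph{Geometry of Quantum Theory} and used as a black box, so there is no in-paper argument to compare against. Your proposal is essentially the standard proof one finds in Varadarajan's text itself: construct a finite measure equivalent to Haar on $G$, push it forward through $\beta$ to get existence and the null-set criterion, and handle uniqueness of the measure class via the $\rho$-function/disintegration correspondence $\alpha\leftrightarrow\alpha^{0}$ between quasi-invariant measures on $X$ and measures in the Haar class on $G$. The outline is correct and the identification of the ``hard point'' (existence of a strictly positive Borel $\rho$-function and the well-definedness of the lift) is accurate; once that is in hand, all the remaining clauses follow as you indicate.
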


\begin {thm} (Theorem 6.19) Let G be a connected, simply connected complex semisimple Lie group, and $G_0$ a maximal compact subgroup of G. Let $\alpha$ be an invariant Borel measure on $X = G/G_0$. Let m be an irreducible representation of $G_0$ in a Hilbert space $\mathscr{K}$, and let $m{'}(g \rightarrow m{'}(g))$ be the unique complex-analytic homomorphism of G into the group of invertible linear endomorphisms of $\mathscr{K}$such that $m{'}(h) = m(h)$ for all $h \in G_0$. Then there exists exactly one map $x \rightarrow \langle .,. \rangle_x$ of X into the set of positive definite linear products on $\mathscr{K} \times\mathscr{K}$ such that $\langle .,. \rangle_{x_0} = \langle .,. \rangle$ and
$\langle u,v \rangle_x = \langle m{'}(g)u, m{'}(g)v \rangle_{g.x}$ for all $(g,x) \in G \times X$. If $\mathscr{H}$ is the Hilbert space of Borel maps f of X into $\mathscr{K}$ such that $\norm{f}^2 = \int_X \langle f(x), f(x) \rangle_x d\alpha(x) < \infty$, and we define P and U by 
\begin {align*}
U_g f(x) &= m{'}(g) f(g^{-1}x),  f\in\mathscr{H} \text {  Momentum operator}.\\
P_E F &= \chi_E f \text { Position  operator}. \\
\end {align*},
then (U, P) is a system of imprimitivity equivalent to the system induced by m.
\end {thm}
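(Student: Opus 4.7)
My plan is to use the transitivity of $G$ on $X$ to force a formula for the inner products $\langle\cdot,\cdot\rangle_x$, check that the formula is well defined and positive definite, verify the SI axioms from the resulting covariance identity and the $G$-invariance of $\alpha$, and finally match the realization $(U,P)$ with the bundle picture of the SI induced by $m$.

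Since $G$ acts transitively on $X = G/G_0$, every $x \in X$ has the form $g.x_0$, and evaluating the desired covariance at $x_0$ forces
\[
\langle u,v\rangle_{g.x_0} \;=\; \langle m'(g^{-1})u,\, m'(g^{-1})v\rangle,
\]
which already gives uniqueness. For existence I would take this as the definition and check independence of the choice of $g$: if $g.x_0 = g'.x_0$, then $h := g^{-1}g' \in G_0$ and the question reduces to $\langle m'(h)u, m'(h)v\rangle = \langle u,v\rangle$. Because $G_0$ is compact, we may assume by the usual averaging argument that the chosen inner product on $\mathscr{K}$ is $m$-invariant, and $m'|_{G_0}=m$ then gives the claim. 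Positive-definiteness is automatic since $m'(g^{-1})$ is an invertible linear endomorphism. The full covariance identity $\langle u,v\rangle_x = \langle m'(g)u, m'(g)v\rangle_{g.x}$ at arbitrary $(g,x)$ then follows from the definition together with the homomorphism property of $m'$.

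With $\mathscr{H}$ defined as the Hilbert space of Borel sections with the twisted pointwise inner product, unitarity of $U_g$ is a one-line computation,
\begin{align*}
\|U_g f\|^2 &= \int_X \langle m'(g)f(g^{-1}x),\, m'(g)f(g^{-1}x)\rangle_x\, d\alpha(x) \\
&= \int_X \langle f(g^{-1}x),\, f(g^{-1}x)\rangle_{g^{-1}x}\, d\alpha(x) \;=\; \|f\|^2,
\end{align*}
using the covariance of $\langle\cdot,\cdot\rangle_{(\cdot)}$ in the first step and the $G$-invariance of $\alpha$ in the second. The representation identity $U_{g_1 g_2} = U_{g_1}U_{g_2}$ is immediate from the homomorphism property of $m'$, and $U_g P_E U_g^{-1} = P_{g.E}$ is automatic because $P_E$ acts pointwise by multiplication by $\chi_E$.

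For equivalence with the SI induced by $m$ I would fix a Borel cross-section $s: X \to G$ with $s(x_0)=e$ and introduce $\Psi : \mathscr{H} \to \mathscr{H}_{\mathrm{ind}}$ by $(\Psi f)(g) := m'(g^{-1}) f(g.x_0)$. The $G_0$-equivariance $(\Psi f)(gh) = m(h)^{-1}(\Psi f)(g)$ for $h \in G_0$ follows from $m'|_{G_0}=m$, isometry is the very definition of $\langle\cdot,\cdot\rangle_x$, and a short cocycle calculation shows that $\Psi$ intertwines $U_g$ with the canonical induced action $(V_{g_0}\phi)(g)=\phi(g_0^{-1}g)$ and $P_E$ with multiplication by $\chi_{\beta^{-1}(E)}$; hence $(U,P)$ is unitarily equivalent to the SI induced by $m$. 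The main obstacle is the well-definedness of $\langle\cdot,\cdot\rangle_x$: it quietly invokes two facts that must be handled carefully, namely that the original inner product on $\mathscr{K}$ can be chosen $m$-invariant thanks to compactness of $G_0$, and that the complex-analytic extension $m'$ really does restrict to $m$ on $G_0$ so that the fibre data are consistent. Once those are in hand, the remainder of the argument is bookkeeping.
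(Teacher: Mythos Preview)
The paper does not prove this statement at all: Theorem~6.19 is quoted verbatim from Varadarajan's text \cite{Varadarajan1985} as one of the ``recollected results'' used later, and no argument is supplied. So there is nothing to compare your proposal against on the paper's side.

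That said, your outline is the standard proof and is essentially correct. One point deserves tightening: you write that ``we may assume by the usual averaging argument that the chosen inner product on $\mathscr{K}$ is $m$-invariant,'' but in the statement the inner product $\langle\cdot,\cdot\rangle$ on $\mathscr{K}$ is \emph{given}, not free to be replaced, and the conclusion demands $\langle\cdot,\cdot\rangle_{x_0}=\langle\cdot,\cdot\rangle$ exactly. The correct justification is that $m$ is an irreducible representation of the compact group $G_0$ on the Hilbert space $\mathscr{K}$, hence unitary with respect to the given inner product (or, by Schur, any $G_0$-invariant Hermitian form is a positive scalar multiple of it); this is what makes $\langle m'(h)u,m'(h)v\rangle=\langle u,v\rangle$ for $h\in G_0$ hold and your well-definedness step go through. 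With that adjustment, the rest of your argument---covariance of the family $\langle\cdot,\cdot\rangle_x$, unitarity of $U_g$ from invariance of $\alpha$, the imprimitivity relation, and the intertwiner $\Psi$ to the induced picture---is fine.
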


\section {Little groups (stabilizer subgroups)}
Some of the different systems of imprimitivity that live on the orbits of the stabilizer subgroups are described below. It is good to keep in mind the picture that SI is an irreducible unitary representation of Poincar$\grave{e}$ group $\mathscr{P}^+$ induced from the representation of a subgroup such as $SO_3$, which is a subgroup of homogeneous Lorentz, as $(U_m(g)\psi)(k) = e^{i\{k,g\}}\psi(R^{-1}_mk)$ where g belongs to the $\mathscr{R}^4$ portion of the Poincar$\grave{e}$ group, m is a member of the rotation group, $\{k,g\} = k_0 g_0 - k_1 g_1 - k_2 g_2 - k_3 g_3$, and the expression is in momentum space.

The stabilizer subgroups of the Poincar$\grave{e}$ group $\mathscr{P}^+$ can be described as follows: \cite {Kim1991}.
Time-like particle: There is a reference frame in which the 4-component momentum is proportional to (0,0,0,1) and the stabilizer subgroup is $SO_3$ that describes the spin. The walker, a massive particle, is rest in this frame. Let us denote the eigen vector of the Casimir operator $P_\mu P^\mu$ as  $\ket{0\lambda}$. Then, we can describe the invariant space under the Poincar$\grave{e}$ group $\mathscr{P}^+$ as $\Lambda_p\ket{0\lambda} = \ket{p\lambda}$ by applying the Lorentz boost. Any Lorentz operator operating on this space can be shown to be by a rotation (spin in our case).

Space-like particle: The Lorentz frame in which the walker is at rest has momentum proportional to (1,0,0,0) and the little group is
again $SO_3$ and this time the rotations will change the helicity. In this imaginary mass case the little group is rotations around the space axis and the analysis above carries through.

Light-like particle: There is no frame in which the relativistic quantum particle is at rest but the frame where the momentum is proportional to $(1,1,0,0)$ has the stabilizer subgroup with elements of the form $J_1$, that is a rotation around the first component of momentum and the boost $\Lambda_p$ in the spatial direction \cite {Kim1991}. These two operators commute and the induced representation can be constructed as above. 
\
\section {Fiber bundle representation of relativistic quantum particles}

The states of a freely evolving relativistic quantum particles are described by unitary irreducible representations of Poincar$\grave{e}$ group that has a geometric interpretation in terms of fiber bundles. 

The 3+1 spacetime discrete Lorentz group $\hat{O}(3,1)$-orbits  of the momentum space $\mathscr{R}^4$, where the systems of imprimitivity established will live, described by the symmetry $\hat{O}(3,1)\rtimes\mathbb{R}^4$. The orbits have an invariant measure $\alpha^+_m$ whose existence is guaranteed as the groups and the stabilizer groups concerned are unimodular and in fact it is the Lorentz invariant measure $\frac{dp}{p_0}$ for the case of forward mass hyperboloid.  The orbits are defined as:
\begin {align}
X^{+,2}_m &= \{p: p^2_0 - p^2_1 - p^2_2 - p^2_3 = m^2, p_0 > 0\}, \text{\color{blue} forward mass hyperboloid}.\\
X^{+,2}_m &= \{p: p^2_0 - p^2_1  - p^2_2 - p^2_3 = m^2, p_0 < 0\}, \text{\color{blue} backward mass hyperboloid}. \\
X_{00} &= \{0\}, \text{\color{blue} origin}.
\end {align}
Each of these orbits are invariant with respect to $\hat{O}(3,1)$ and let us consider the stabilizer subgroup of the first orbit at p=(0,0,0,1). Now, assuming that the spin of the particle is 1 (massive Boson) let us define the corresponding fiber bundles (vector) for the positive mass hyperboloid that corresponds to the positive-energy states by building the total space as a product of the orbits and $\mathbb{C}^4$.
\begin {align}
\hat{B}^{+,2}_m &= \{(p,v) \text{   }p\in{\hat{X}^+_m, }\text{   }v\in\mathscr{C}^2,\sum_k p_k\gamma_k v = mv\}. \\
\hat{\pi} &: (p,v) \rightarrow {p}. \text{  Projection from the total space }\hat{B}^{+,2}_m \text{ to the base }\hat{X}^{+,2}_m.
\end {align}
The states of the particles are defined on the Hilbert space $\hat{\mathscr{H}}^{+,2}_m$, square integrable functions on Borel sections of the bundle $\hat{B}^{+,2}_m$ with respect to the invariant measure $\alpha^+_m$, whose norm induced by the inner product is given below:
\begin {equation}
 \norm{\phi}^2 = \int_{X^+_m}p_0^{-1}\langle\phi{p},\phi{p}\rangle.{d\alpha}^+(p).
\end {equation}

To be more technical, we need Schwartz spaces, and their duals tempered distributions to guarantee Fourier transforms, to move with ease between configuration and momentum space descriptions. However, this is needed when setting up differential equations and so we will deal with them when we work with Fermionic examples in the future. 

Let us now state and establish the main result for the case of massive Boson with time-like momentum by constructing a strict cocycle from the representation of a subgroup following the prescription (lemma 5.24) in Varadarajan{'} text. The SI is a consequence of strict cocycle property and as he noted the construction is not canonical.

lemma 5.24 \cite{Varadarajan1985}: 
Let m be a Borel homomorphism of $G_0$ into M. Then there exists a Borel map b of G into M such that 
\begin {align} \label {eq: cocycle} 
B(e) &= 1. \\
b(gh) &= b(g)m(h), \forall (g,h) \in G \times G_0.  
\end {align}
Corresponding to any such map b, there is a unique stricy (G, X)-cocycle
f such that
\begin {equation} \label {eq: cocycle2}
f(g, g^1) = b(gg^1)b(g^1)^{-1}.
\end {equation}
$\forall (g,g^{-1}) \in G \times G$. f defines m at $x_0$. Conversely, f is a strict (G, X)-cocycle and b is a Borel map such that it satisfies \ref {eq: cocycle} equation pair, then the restriction of b to $G_0$ coincides with the homomorphism m defined of at $x_0$ and b satisfies \ref {eq: cocycle2}.

\begin {thm} Time-like Weyl representation of Poincar$\grave{e}$ group is a transitive system of imprimitivity. 
\end {thm}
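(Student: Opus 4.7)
The plan is to apply Mackey's machinery as set up in Theorem 6.19 and Lemma 5.24: take the spin representation $m$ of the little group $G_0 = SO(3)$ at the reference momentum $p_0 = (0,0,0,m)$ (acting on the fiber $\mathscr{K} \subset \mathbb{C}^4$ cut out by the algebraic constraint $\sum_k p_k \gamma_k v = m v$), lift it to a Borel map $b:\mathscr{P}^+ \to M$ with $b(e)=1$ and $b(gh)=b(g)m(h)$ for all $h \in G_0$, and extract the strict cocycle $f(g,x) = b(gg')b(g')^{-1}$ on $\mathscr{P}^+ \times X^{+,2}_m$ whose existence and uniqueness are guaranteed by Lemma 5.24. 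A natural explicit choice is $b(g) = \Lambda_{g\cdot p_0}^{-1}\,g$, which records the Wigner rotation sitting inside each Lorentz transformation.

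Next, on the Hilbert space $\hat{\mathscr{H}}^{+,2}_m$ of square-integrable sections of the bundle $\hat{B}^{+,2}_m$ I would define
\begin{align*}
(U_g \phi)(p) &= f(g,\, g^{-1}p)\, \phi(g^{-1}p), \\
(P_E \phi)(p) &= \chi_E(p)\, \phi(p).
\end{align*}
Because the orbit measure $\alpha^+_m = dp/p_0$ is Lorentz-invariant (the groups are unimodular, as noted in the paper), the Radon--Nikodym factor appearing in the general formula of the example collapses to $1$, and unitarity of $U_g$ reduces to the fact that $f(g,\cdot)$ acts by isometries on each fiber --- inherited from the unitarity of $m$ on $\mathscr{K}$ and the intertwining property of $m'$ in Theorem 6.19. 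The homomorphism identity $U_{g_1 g_2} = U_{g_1} U_{g_2}$ is then exactly the cocycle relation $f(g_1 g_2, x) = f(g_1, g_2 x)\, f(g_2, x)$, and a one-line computation yields $U_g P_E U_g^{-1} = P_{g\cdot E}$, showing that $(U,P)$ is a system of imprimitivity.

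For transitivity, it suffices to recall that the proper orthochronous Lorentz component of $\mathscr{P}^+$ already acts transitively on the forward mass hyperboloid by boosts: every $p \in X^{+,2}_m$ is reached from $p_0$ by $\Lambda_p$. Hence the $G$-space $(\mathscr{P}^+, X^{+,2}_m)$ is transitive, and the induced SI inherits transitivity. To reach the Weyl representation, I would then second-quantize by passing to the symmetric Fock space $\Gamma_s(\hat{\mathscr{H}}^{+,2}_m)$ and using the covariance identity $\Gamma(U_g)\, W(\phi)\, \Gamma(U_g)^{-1} = W(U_g \phi)$ together with the diagonal lift of $P$ to the $n$-particle sectors; the resulting Weyl pair remains imprimitive and transitive over the same base orbit.

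The main obstacle will be the bookkeeping surrounding the cocycle $f$ when the fiber representation is not one-dimensional: one must verify that the Borel lift $b$ can be chosen so that the pointwise action $f(g, g^{-1}p)$ preserves the algebraic constraint $\sum_k p_k \gamma_k v = m v$ defining $\hat{B}^{+,2}_m$, rather than merely acting on the ambient $\mathbb{C}^4$. This is exactly what the uniqueness clause of Theorem 6.19 about the invariant family of inner products $\langle \cdot,\cdot\rangle_x$ is designed to supply: it guarantees that the fiberwise unitaries patch together into a globally defined unitary $U_g$ that leaves the bundle --- and hence $\hat{\mathscr{H}}^{+,2}_m$ --- invariant, so that the SI constructed above is well-posed.
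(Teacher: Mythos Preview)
Your argument is correct and is essentially the classical Wigner--Mackey route: induce the finite-dimensional spin representation of $SO(3)$ on the fiber $\mathscr{K}$ up to $\mathscr{P}^+$ via Lemma~5.24, obtain the one-particle SI on $\hat{\mathscr{H}}^{+,2}_m$, and then second-quantize by applying the Fock functor $\Gamma$ and the covariance $\Gamma(U_g)W(\phi)\Gamma(U_g)^{-1}=W(U_g\phi)$.

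The paper, however, runs the two steps in the \emph{opposite order}. It first fixes a unitary representation $U$ of $SO(3)$ on $\hat{\mathscr{H}}^{+,2}_m$ together with an additive $1$-cocycle $v:SO(3)\to\hat{\mathscr{H}}^{+,2}_m$ (satisfying $v(gh)=v(g)+U_gv(h)$), and uses these to build Weyl operators $V_g=W(v(g),U_g)$ on the Fock space $\Gamma_s(\hat{\mathscr{H}}^{+,2}_m)$; this $g\mapsto V_g$ is a \emph{projective} representation of the little group on Fock space, and \emph{that} is the homomorphism $m$ fed into Lemma~5.24 to produce the strict cocycle and the induced SI. So in the paper the inducing data already live on the Fock space and carry the extra displacement datum $v$, whereas in your proof the inducing data are the finite-dimensional spin matrices and the Fock/Weyl structure is layered on afterwards. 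Your route is more elementary and recovers the standard second-quantized Wigner representation (the case $v\equiv 0$, since $W(0,U_g)=\Gamma(U_g)$); the paper's route accommodates nontrivial $1$-cocycles $v$, hence genuinely displaced Weyl families, at the cost of inducing from a projective rather than linear representation of the little group.
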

\begin {proof}
Let $g:\mathscr{G} \rightarrow U_g(\hat{\mathscr{H}}^{+,2}_m)$ be a homomorphism from the rotation group $\mathscr{G} = SO_3$ to the unitary representation of the group in $\hat{\mathscr{H}}^{+,2}_m$. We note that it is a stabilizer subgroup which is also closed at the momentum (m,0,0,0) and so $H/\mathscr{G}$ is a transitive space.

Consider a map $v(g):\mathscr{G} \rightarrow \hat{\mathscr{H}}^{+,2}_m$ satisfying the first order cocycle relation $v(gh) = v(g) + U_g v(h), g,h \in \mathscr{G}$.
For an example of such a map consider the following: \cite {KP1992} 
\begin {align*}
\mathscr{H} &= \oplus_{j=0}^\infty \mathscr{H}_j. \\
H &= 1 \oplus \oplus_{j=1}^\infty H_j. \\
U_t &= e^{-itH}, t \in \mathscr{G}. \\
v(t) &= tu_0 \oplus \oplus_{j=1}^\infty (e^{-itH_j} u_j
- u_j).
\end {align*}
Now, we can define the Weyl operator $V_g = W_g (v(g), U_g)$ where $g \in \mathscr{G}$ in the Fock space $\Gamma_s(\hat{\mathscr{H}}^{+,2}_m)$. 

This is a projective unitary representation satisfying the commutator relation $V_g V_h = e^{iIm\langle v_g, U_g v_h \rangle} V_h V_g$ and let us denote the homomorphism from $\mathscr{G}$ to $V_g$ by m. This guarantees a map b that satisfies $b(gh) = b(g)m(h), g \in \mathscr{P}, h \in \mathscr{G}$ and such map can be constructed by considering the map
$c(x \rightarrow c(x)$ as Borel section of $\mathscr{P} / \mathscr{G}$ (not a canonical choice) with $c(x_0) = e$. The map
$\beta$ maps $g \in \mathscr{P} \rightarrow g \mathscr{G}$
\begin {align}
a(g) &= c(\beta (g))^{-1}. \\
b(g) &= m(a(g)). 
\end {align}
Then the strict cocycle $\phi$ satisfies $\phi(g_1, g_2) = b(g_1 g_2)b(g_2)^{-1}$.

We can now set up the SI relation as follows:
Let us denote the unique invariant measure class of X, noting that it is lcsc, by $\mathbb{G}$ and a member of this class as $\alpha$ and the Radon-Nikodym derivative $r_g = \frac{d\alpha}{d\alpha^{g^{-1}}}$.
\begin {align}
U_g f(x) &= \{r_g (g^{-1} x)\}^{\frac{1}{2}} \phi(g, g^{-1}x) f(g^{-1}x),  f\in\mathscr{H} \text {  Momentum operator}.\\
P_E F &= \chi_E f \text { Position  operator}.
\end {align}

$\blacksquare$
\end {proof}

We can construct the conjugate pair of field operators for the Fock space $\Gamma_s(\hat{\mathscr{H}}^{+,2}_m))$ as follows:

Let $p_g$ be the stone generator for the family of operators $P_{gt,p}, g \in \mathscr{P}, t \in \mathbb{R}$ and q(g) = p(ig) and we get the creation and annihilation operators as $a(g)^\dag = \frac {1}{2} ( q(g) - i p(g)$ and  $a(g) = \frac {1}{2} ( q(g) + i p(g)$.

\section {Summary and conclusions}
We derived the covariant field operators using induced representations of groups and  expressed them in terms of systems of imprimitivity. We established the results for the massive Boson case by inducing a representation of Poincar$\grave{e}$ group frmm the subgroup that is a stabilizer at the momentum (1,0,0,0). In the next step we will develop the Fermionic QSC and detail the basic adopted processes in this case. This opens up the ways for treating topological quantum materials with relativistic electrons that are of importance to a specific form of quantum computation in a rigorous fashion.


\end{document}